\documentclass[aps,prx,onecolumn,superscriptaddress,nofootinbib,11pt,floatfix]{revtex4-2}

\usepackage{mathtools}
\usepackage{bbm}
\usepackage{ulem}   
\usepackage[dvipsnames]{xcolor}
\usepackage{braket}
\usepackage{bbold}
\usepackage{comment}
\usepackage{amsthm} 
\usepackage{amssymb}    
\usepackage{bm}
\usepackage[colorlinks, linkcolor=blue, citecolor=blue]{hyperref}

\numberwithin{equation}{section}

\def\U{\mathrm{U}(1)}
\def\T{\mathcal{T}}

\def\A{\mathcal{A}}

\def\B{\mathcal{B}}

\def\E{\mathcal{E}}
\def\Z{\mathcal{Z}}
\def\cM{\mathcal{M}}

\def\bZ{\mathbb{Z}}

\makeatletter 
\def\l@subsubsection#1#2{}
\makeatother

\newcommand{\Tr}{\operatorname{Tr}}

\newtheorem{theorem}{Theorem}[section]

\theoremstyle{definition}

\begin{document}

\title{Note on proliferation transitions of non-Abelian anyons}
\author{Meng Cheng}
\affiliation{Department of Physics, Yale University, New Haven, Connecticut 06511, USA}
\begin{abstract}
    We study phase transitions out of a (2+1)d topological phase driven by the proliferation of anyons in a braided fusion subcategory. We describe a general theoretical framework for such transitions, in which the topological quantum field theory (TQFT) is coupled to dynamical matter associated with anyons in the subcategory. We interpret this construction using symmetry topological field theory (SymTFT), where the dynamics is localized on the symmetry boundary of the (3+1)d slab. We analyze several examples, and present explicit Chern-Simons-Higgs (CSH) field theories realizing the proliferation transitions. We also examine proliferation transitions that implement gauging of non-invertible one-form symmetry in the parent TQFT. This includes a general construction for ${\rm Rep}(G)$ one-form symmetry for a finite group $G$, and a CSH realization of the non-invertible one-form gauging ${\rm SU}(2)_{10}\rightarrow {\rm Spin}(5)_1$. We discuss which anyons proliferate at these transitions. 
\end{abstract}

\maketitle

\tableofcontents

\section{Introduction}

A broad class of phase transitions out of a topological phase in (2+1)d is triggered by proliferation of anyons. That is, certain anyon excitations become light to drive the transition and govern the low-energy dynamics. At low energy, the topological phase is described by a topological quantum field theory (TQFT). To describe phase transitions, it is necessary to move beyond the topological limit. A useful approach is to construct a continuum field theory, which realizes the relevant phases in different regimes in the parameter space. Following \cite{Cheng:2026qax}, we refer to such a theory as a transition field theory.

 A natural theoretical framework for transition field theories is the Chern-Simons-Higgs (CSH) theory, where the TQFT $\T$ is represented as a Chern-Simons theory~\cite{Witten:1988hf}, and the dynamical anyon is represented by a scalar field coupled to the gauge field. The scalar field transforms in a certain representation of the gauge group, such that when the field is massive it corresponds to a proliferating anyon in $\T$. Then the condensation of the scalar field Higgses the gauge field and leads to a new phase. CSH theories have been utilized to describe phase transitions of interest in condensed matter physics, see e.g. \cite{kivelson1992global, WenWu1993, chen1993mott, Barkeshli:2012rja, ClarkeNayak:2015, LeeQED3, Goldman:2019wvz, ZouQCD3, MaQCD3, song2024phase, Shi:2024pem, Ji:2026yfj, Dumitrescu:2026vre, Lu:2026fid}. We will review such CSH transitions below. 

When a single Abelian anyon proliferates, a general transition field theory has been developed in \cite{Cheng:2026qax}, where the dynamics of the anyons is described by a single complex scalar coupled to a $\U$ gauge field. The theory is based on a presentation of the starting TQFT, where the Abelian anyons under focus are represented by Wilson lines of a $\U$ gauge field. More specifically, let $a$ denote the Abelian anyon of order $n$, with one-form anomaly $p\pmod {2n}$. The presentation can be summarized as the following identity \cite{Cordova:2017vab}: 
\begin{equation}
    \T=\frac{\T \boxtimes \T_{n,p}}{(a,p,-1)}.
    \label{eq:identity_abelian}
\end{equation}
Here $\T_{n,p}$ is a $\bZ_n$ gauge theory with a Dijkgraaf-Witten term~\cite{Dijkgraaf:1989pz} labeled by $pn$, and the quotient means gauging the (anomaly-free) one-form symmetry generated by the boson $(a,p,-1)$. The transition field theory based on \eqref{eq:identity_abelian} then describes the transition from $\T$ to $\T'$ given by
\begin{equation}
    \T'=\frac{\T\boxtimes \U_{-np}}{(a,p)}.
    \label{eq:Tpr-abelian}
\end{equation}
This theory is further generalized in \cite{peelingoff} to allow for the proliferation of multiple subgroups of Abelian anyons. 

In this note, we will describe a further generalization of Eq. \eqref{eq:identity_abelian}, which singles out a \emph{braided fusion subcategory} of $\T$. This subcategory may contain both Abelian and non-Abelian anyons. The physical motivation for considering a subcategory is the following intuitive picture: in a transition field theory, the matter fields become gapless near the transition, and it is natural to postulate that the full algebra of operators generated from these fields also develops gapless correlations. Because the matter fields create anyon excitations,  this picture then suggests that an entire subcategory of anyons should proliferate.

We provide a physical explanation of the generalized identity using Symmetry Topological Field Theory (SymTFT) \cite{JiWen2020, GaiottoKulp2021,LichtmanThorngrenLindnerSternBerg2021,ApruzziBonettiGarciaEtxebarriaHosseiniSchaferNameki2023,FreedMooreTeleman2022,ChatterjeeWen2023,MoradiMoosavianTiwari2022, Bhardwaj:2023ayw, Bhardwaj:2024qiv, Bhardwaj:2025piv, Wen:2025thg}.  We then apply this construction to study phase transitions where the proliferating anyons belong to the subcategory, generalizing Eq. \eqref{eq:Tpr-abelian}. In particular, when $\T$ contains a subcategory isomorphic to that of integer-isospin anyons of ${\rm SU}(2)_k$, we provide concrete CSH theories to realize various proliferation transitions triggered by this subcategory.

We finally examine proliferation transitions connecting two TQFTs related by gauging non-invertible one-form symmetries. Such gaugings are often described as ``anyon condensation"~\cite{Bais:2008ni, Kong:2013aya}, with the intuitive picture that the ``condensation" of certain bosonic anyons causes the transformation. From the spacetime perspective, gauging means proliferating the network of topological lines labeled by a ``condensable algebra", which is a direct sum of bosonic anyons.
This picture might suggest that the anyons appearing in the condensable algebra are precisely those that proliferate at the transition. However, these objects need not themselves form a fusion subcategory, whereas we expect the light anyons in a transition field theory to be closed under fusion. We investigate this distinction in several examples,  including gauging of ${\rm Rep}(G)$ and an explicit CSH theory realizing the non-invertible gauging ${\rm SU}(2)_{10}\rightarrow {\rm Spin}(5)_1$. 

\section{Peeling off non-Abelian anyons}
\label{sec:symtft}

First we describe the non-Abelian generalization of Eq. \eqref{eq:identity_abelian}. In the following $\T$ refers to the modular tensor category (MTC) underlying the TQFT.

Suppose $\T$ has a braided fusion subcategory $\B$. Lines in $\B$ generate (possibly non-invertible) one-form symmetry.  We propose that:  
\begin{equation}
    \T=\frac{\T\boxtimes \Z(\B)}{\A_{\B}},
    \label{symtft}
\end{equation}
Let us unpack the expression. $\Z(\B)$ is the Drinfeld center of $\B$. Since $\B$ is braided, $\Z(\B)$ contains both $\B$ and $\overline{\B}$ as subcategories. Using $\B\subseteq \T$ and $\overline{\B}\subseteq\Z(\B)$, one can form a  canonical condensable algebra $\A_{\B}$ in $\B\boxtimes \overline{\B}\subseteq \T\boxtimes \Z(\B)$ given by
\begin{equation}
    \A_{\B}= \bigoplus_{x\in \B} (x, x^*).
    \label{eq:AB}
\end{equation}
Here $x$ runs over all simple objects (i.e. anyon types) of $\B$, and $x^*$ is the dual of $x$ in $\bar{\B}$.
The quotient means gauging the algebra $\A_{\B}$ \cite{Kong:2013aya, Yu:2021zmu, Kaidi:2021gbs}. 

While we do not have a completely general mathematical proof of Eq. \eqref{symtft} yet, it can be established in many nontrivial cases:
\begin{enumerate}
    \item When $\B$ is modular, $\Z(\B)=\B\boxtimes \bar{\B}$. The identity is easy to prove using the factorization property: because $\B$ is modular, we have $\T=\T'\boxtimes \B$~\cite{mueger2002structure}, where $\T'$ is another MTC. Hence $\T\boxtimes \Z(\B)=\T'\boxtimes \B\boxtimes \B\boxtimes \bar{\B}$. Because $\A_\B$ is Lagrangian in $\B\boxtimes \bar{\B}$ (i.e. the second and fourth factors), gauging it removes these two factors and results in a trivial theory. We are left with $\T'\boxtimes \B=\T$.

    \item Suppose $\B$ is a symmetric fusion category, i.e. all anyons in $\B$ braid trivially with each other~\footnote{Here, trivial braiding between anyons $a, b\in \B$ means that the S matrix element $S_{ab}=\frac{d_ad_b}{{\cal D}_\B}$, where ${\cal D}_{\B}$ is the total quantum dimension of $\B$.}. It is well-known that such a $\B$ is (super) Tannakian: $\B\simeq {\rm Rep}(G, z)$ for some finite group $G$ (see below for a more detailed description of ${\rm Rep}(G,z)$). In this case, the identity is mathematically proven in \cite{Lan:2016rcq} (see Lemma 4.18)~\footnote{We thank Yunqin Zheng for bringing Lemma 4.18 of \cite{Lan:2016rcq} to our attention.}.  
    \item When $\B$ is Abelian, it is straightforward to verify the identity by explicitly carrying out the quotient \cite{Cheng:2026qax, peelingoff}.
\end{enumerate}

Physically, it is very instructive to interpret Eq. \eqref{symtft} as an explicit construction of the SymTFT presentation of $\T$, where the one-form symmetry $\B$ is ``isolated" to the topological boundary.

To this end, denote by $\E_{\B}$ the M\"uger center of $\B$. That is, $\E_{\B}$ consists of anyons in $\B$ that braid trivially with all anyons in $\B$.  Recall that a braided fusion category with all braiding trivial is said to be ``symmetric". A mathematical theorem due to Deligne~\cite{Deligne:2002} shows that a symmetric fusion category is always equivalent to ${\rm Rep}(G,z)$, where $G$ is a finite group, and ${\rm Rep}(G,z)$ is the category of finite-dimensional representations of $G$. $z\in G$ is an order-2 central element that determines which of the objects have fermionic statistics \footnote{For more details on ${\rm Rep}(G, z)$, see Appendix G of \cite{Ellison:2024svg} for an elementary introduction.}.

To understand the SymTFT interpretation,  first note that the Drinfeld center $\Z(\B)$ is equivalent to a ``slab" of the 4d Crane-Yetter TQFT with $\B$ as the input~\cite{CraneYetter1993, CraneKauffmanYetter1997, WalkerWang2012}.  The topological line operators in the bulk are precisely described by $\E_\B={\rm Rep}(G,z)$. They generate the two-form symmetry of the 4d TQFT. When viewed as a gapped phase,  ${\rm Rep}(G, z)$ is the category of particle-like excitations. As a result, the bulk can actually be viewed as a $G$ gauge theory coupled to (bosonic or fermionic) matter forming a $G$ symmetry-protected topological phase~\cite{WangChen:2017, KongWen:2014, LanKongWen:2018, LanWen:2019, JohnsonFreyd:2022}. The nature of the SPT matter is determined by $\B$ and we will not need to know the precise relation. In addition, there are also topological surface operators generating the one-form symmetry. They can create string-like excitations.

 The top and bottom boundaries of the slab are endowed with topological boundary conditions where the bulk surface operators can terminate topologically~\footnote{This is often described as the condensation of string-like excitations.}. This condition does not uniquely fix the boundary, and we require further that line operators on the boundary correspond to $\B$ and $\bar{\B}$, respectively, on the top and bottom boundaries. Using SymTFT parlance, we treat the top boundary as the ``topological" or ``symmetry" boundary, and the bottom one as the ``physical" boundary.

Given any 3d QFT $\cal Q$ with $\B$ one-form symmetry, imagine stacking $\cal Q$ on the physical boundary, and gauge $\A_\B$. The SymTFT construction is the statement that this slab is completely equivalent to ${\cal Q}$, which amounts to the following identity:
\begin{equation}
    \frac{{\cal Q}\boxtimes \Z(\B)}{\A_{\B}}\simeq {\cal Q}.
\end{equation}
Eq. \eqref{symtft} is the special case when ${\cal Q}$ is a TQFT. A similar construction for 2d QFTs was discussed in \cite{Lin:2022dhv}.

Crucially, this presentation of $\T$ allows us to ``isolate" the $\B$ subcategory of anyons. Intuitively, since they are now on the topological boundary, one can introduce dynamical matter just on this boundary and drive phase transitions, which then describe transitions out of $\T$ when combining with the physical boundary. The only requirement is that the phase transitions on the symmetry boundary should not interfere with the bulk, so that the boundary dynamics commutes with the gauging by $\A_\B$. Equivalently, the dynamics on the boundary must preserve the one-form symmetry generated by the M\"uger center ${\rm Rep}(G,z)$. Below we will provide field theory realizations of this physical picture in various examples.

It is useful to study the (generalized) one-form symmetry of the proliferation transition. Before introducing dynamical matter on the symmetry boundary, the TQFT has one-form symmetry given by $\T$.  The topological lines on the symmetry boundary is described by the $\B$ subcategory, and those on the physical boundary is given by $\B'_\T=\frac{\T\boxtimes \overline{\B}}{\A_\B}$. Clearly $\B'_\T$ and $\B$ braid trivially with each other, and it is easy to check that ${\cal D}_{\B'_\T}\cdot {\cal D}_{\B}={\cal D}_\T$, thus $\B'_\T$ is the M\"uger centralizer of $\B$ in $\T$~\footnote{We thank Sakura Schafer-Nameki for comments on this point.}.  With the SymTFT presentation, it is clear that the topological lines on the physical boundary are not affected by the dynamics on the symmetry boundary. Thus the one-form symmetry of the proliferation transition field theory contains $\B'_\T$ as a subcategory. Then we also need to account for the one-form symmetry on the symmetry boundary not explicitly broken by the dynamical matter. As emphasized above, the latter at least includes $\E_\B$, which is identified with $\E_\B\subset \B'_\T$. In all examples discussed below in Sec. \ref{sec:examples}, no other one-form symmetry is left besides $\E_\B$, so the full one-form symmetry of the theory is given by $\B'_\T$.

\section{Review of Chern-Simons-Higgs theory}
\label{sec:CSHreview}

Suppose $G$ is a compact semisimple Lie group. A scalar field $\bm{\phi}$ transforms in a representation $R$ of $G$. The scalar may be complex or real, which we will specify whenever necessary. We consider the following Lagrangian:
\begin{equation}
    {\cal L}={\cal L}_{G_k}+|D\bm{\phi}|^2-U(\bm{\phi}),
    \label{eq:CSH}
\end{equation}
Here ${\cal L}_{G_k}$ is the Chern-Simons term of $G$ at level $k$, $D\bm{\phi}$ is the covariant derivative. The Higgs potential $U(\bm{\phi})$ is assumed to be a polynomial of $\bm{\phi}$. For simplicity, we assume there is a unique quadratic term in $U$ \footnote{This is the case for a generic complex irreducible representation. If the representation is real, then depending on whether the field $\bm{\phi}$ is real or complex, we have one or two quadratic terms. In the latter case, the unique term $|\bm{\phi}|^2$ can be selected if an additional $\U$ symmetry rotating the overall phase of $\bm{\phi}$ is imposed.}, with coefficient $r$.

Semiclassically, the theory has two phases :
\begin{enumerate}
    \item $r>0$: the scalar field $\bm{\phi}$ is massive and can be integrated out. The low-energy theory is the $G_k$ TQFT.
    \item $r<0$: the scalar field condenses and Higgses the gauge group from $G$ to a subgroup $H$, which is the stabilizer group ${\rm Stab}_{\bm{v}}(G)$ of the vacuum expectation value (VEV) $\bm{v}=\langle \bm{\phi}\rangle$. The low-energy theory is now a $H_{\tilde{k}}$ CS theory. Here the level $\tilde{k}=I_{H\hookrightarrow G}k$, where $I_{H\hookrightarrow G}$ is the Dynkin index of embedding of $H$ in $G$.
\end{enumerate}

Therefore the CSH Lagrangian \eqref{eq:CSH} realizes a transition between $G_k$ and $H_{\tilde{k}}$. The IR dynamics of the transition point, e.g. whether it flows to a conformal field theory (CFT) or a first-order transition, is beyond the scope of this work. 

It is interesting to consider the inverse mathematical question: given $G$ and $H$ where $H\subset G$ is a subgroup, is it possible to find a representation ${R}$ and a vector $\bm{v}$ in the representation space such that ${\rm Stab}_{\bm{v}}(G)=H$?

Interestingly, the existence of such $R$ and $v$ follows from the Mostow-Palais equivariant embedding theorem \cite{Mostow1957Equivariant, Palais1957Imbedding}. The theorem states that if a compact Lie group $G$ acts on a manifold $M$, then there is a $G$-equivariant embedding of $M$ into a finite-dimensional Euclidean space. In other words, $M$ can be embedded into a finite-dimensional real representation of $G$. Now apply the theorem to the homogeneous space $G/H$, which is acted on by $G$, and let $R$ be a finite-dimensional real representation of $G$, into which $G/H$ embeds. The vector $\bm{v}$ is the image of the identity coset $eH$ under the embedding map. Note that $R$ constructed this way is not necessarily irreducible, and generally not unique.  It should be emphasized that constructing a polynomial potential whose minima are precisely the $G$ orbit of $\bm{v}$ is a separate question, which does not automatically follow from the Mostow-Palais theorem. For all CSH theories studied in this work, we give explicit forms of the Higgs potentials.

Let us briefly discuss the global symmetries of the theory. In general, the zero-form symmetry depends on the representation $R$ and the potential $U$. For example, if $R$ is complex and irreducible, and $U$ only depends on $|\bm{\phi}|^2$, generically the global symmetry is the $\U$ symmetry that rotates the phase of $\bm{\phi}$. There may be additional discrete zero-form symmetries, e.g. possibly from outer automorphisms of $G$. One should therefore examine whether the Higgs phase spontaneously breaks these symmetries or not. More precisely, whether the transformation of the Higgs VEV under the global symmetry can be compensated by a gauge transformation. This is usually determined on a case-by-case basis.

The theory can also have (invertible or non-invertible) one-form symmetry~\cite{Gaiotto:2014kfa, Cordova:2025eim}. The invertible one-form symmetry is the subgroup of $Z(G)$ (the center of $G$) that acts trivially on $R$. 

To determine the full one-form symmetry, including the non-invertible ones, we make the following postulate: the Higgs field proliferates the anyon corresponding to representation $R$ in $G_k$. Consider the fusion subcategory generated by $R$ (under tensor product, direct sum and dual), which we denote by $\B_R$. A natural candidate for the one-form symmetry is the M\"uger centralizer $\B_R'$ of $\B_R$ in $G_k$. If this is indeed the case, then it makes sense to view the CSH transition as being driven by the proliferation of the subcategory $\B_R$. Recently, Ref. \cite{Cordova:2025eim} proposed a more elaborate procedure to determine the one-form symmetry of a CSH theory, and discussed examples where the one-form symmetry determined by their procedure is different from $\B_R'$. The discrepancy deserves further investigation, which we will leave for future work~\cite{Cheng_unpub}.

\section{Examples}
\label{sec:examples}

\subsection{${\rm SU}(2)_k^{\rm int}$}

To use the identity to construct transition field theories, it is convenient to have an explicit Lagrangian for $\Z(\B)$. In the following, we will focus on the example where $\B$ is the subcategory of integer isospin anyons in ${\rm SU}(2)_k$. We will denote it by $\B={\rm SU}(2)_k^{\rm int}$~\footnote{Notice that $\B$ should not be confused with ${\rm SO}(3)_k={\rm SU}(2)_k/\bZ_2$, which is only defined for $k=0\pmod 4$, or as a spin TQFT when $k=2\pmod 4$.}. Throughout this section we assume $k$ is even, otherwise $\B$ is already modular. Then $\E_{\B}=\{1, j=k/2\}$, where the $j=k/2$ line is bosonic/fermionic when $k/2$ is even/odd.

A minimal modular extension of $\B$ is just ${\rm SU}(2)_k$. Thus we have
\begin{equation}
    \Z({\rm SU}(2)_k^{\rm int})=\frac{{\rm SU}(2)_k\boxtimes \overline{{\rm SU}(2)_k}}{(\frac{k}{2}, \frac{k}{2})}.
    \label{eq:Zm}
\end{equation}
In other words, the Drinfeld center is a ${\rm SO}(4)=\frac{{\rm SU}(2)_{\rm L}\times {\rm SU}(2)_{\rm R}}{\bZ_2}$ CS theory, with level $-k,k$.  It is worth noting that there are other CS theory presentations of $\Z(\B)$, e.g. by choosing different minimal modular extensions, which we will return to later.

In the SymTFT slab, one can imagine the two ${\rm SU}(2)$ gauge fields live on the 3d boundaries, while the remaining $\bZ_2$ center propagates in the 4d bulk.  The integer isospin Wilson lines are genuine line operators on the boundary, while the half-integer isospin Wilson lines are attached to topological surface operators in the bulk. Compared with the discussions in Sec. \ref{sec:symtft}, ${\rm SU}(2)_{\rm R}$ corresponds to the symmetry boundary.

To trigger a proliferation transition, one can imagine introducing dynamical anyons carrying integer SU(2) spins on the symmetry boundary. The dynamics must commute with the quotient in Eq. \eqref{eq:Zm}. Roughly speaking, only anyons carrying integer isospin quantum numbers can become dynamical. Alternatively, the dynamics must respect the $\bZ_2$ one-form symmetry generated by the $j=k/2$ line of ${\rm SU}(2)_k$. A convenient formalism is to treat the symmetry boundary as a 3d theory with the $\E_{\B}$ one-form symmetry. 

Within the CSH framework reviewed in Sec. \ref{sec:CSHreview}, the following two scenarios for transition field theories emerge naturally:
\begin{itemize}
    \item The symmetry boundary is locally ${\rm SU}(2)_k$ coupled to scalar fields in integer isospin representations. The condensation of the field triggers a Higgs transition and the gauge group is reduced to a subgroup of ${\rm SU}(2)$. To preserve the $\bZ_2$ one-form symmetry, the subgroup always contains the $\bZ_2$ center. This is guaranteed when the scalars transform in integer isospin representations.
    \item The other scenario is that ${\rm SU}(2)_k$ is the effective low-energy theory, e.g. as the Higgs phase of a CSH theory, which has ${\rm SU}(2)$ as the unbroken gauge group.
\end{itemize}

Below we discuss examples to illustrate these two types of transition field theories.

\subsubsection*{Higgsing of $\mathrm{SU}(2)_k$ and ${\rm SO}(4)_{-k,k}$}

Consider an $\mathrm{SU}(2)_k$ CS theory, coupled to a complex scalar field $\Phi$ in the adjoint (isospin-1) representation. We assume that there is a U(1) symmetry that rotates the overall phase of $\Phi$. 

The most general potential for $\Phi$ preserving both the gauge symmetry and the global $\mathrm{U}(1)$ symmetry (up to $\Phi^4$) is
\begin{equation}
U(\Phi) = r\,\Phi^\dagger\Phi + \lambda(\Phi^\dagger\Phi)^2 + \kappa |\Phi\cdot\Phi|^2.
\label{eq:U}
\end{equation}
Notice that $|\Phi\cdot\Phi|^2\leq (\Phi^\dagger\Phi)^2$. So the potential is stable as long as $\lambda>0, \lambda+\kappa>0$. For $r>0$, the unique minimum of $U$ is $\langle\Phi\rangle=0$. In this case $\Phi$ is massive and can be integrated out. We are then left with ${\rm SU}(2)_k$. In the following we analyze what happens when $r<0$.

Let us start from the $\kappa<0$ case. The potential is minimized when $|\Phi\cdot\Phi|^2= (\Phi^\dagger\Phi)^2$. Writing $\Phi=\mathbf{u}+i\mathbf{v}$ where $\mathbf{u,v}$ are real vectors, this condition implies $\mathbf{u}\parallel \mathbf{v}$. Via a gauge transformation, we can take the minimum to be $\langle\Phi\rangle\propto(1,0,0)$. The stabilizer group is then $\U$, and the CS term becomes $\U_{2k}$. However, since the gauge-invariant charge-2 operator $\Phi\cdot\Phi$ also has a non-zero VEV, the global $\U$ is spontaneously broken and the Higgs phase is gapless due to the Goldstone mode.  Alternatively, if one uses a real isospin-1 scalar instead, there is no global $\U$ symmetry and the IR theory of the Higgs phase is just $\U_{2k}$.

For $\kappa>0$, the minimum satisfies $\Phi\cdot\Phi=0$, or $\mathbf{u}\cdot\mathbf{v}=0$ and $|\mathbf{u}|=|\mathbf{v}|$.
The condensate is therefore genuinely complex and up to gauge may be set to
$\langle\Phi\rangle \propto (1,i,0).$
The stabilizer of this condensate is the $\bZ_2$ center of $\mathrm{SU}(2)$. Thus the gauge group is broken down to $\bZ_2$, and the level-$2k$ CS term becomes a Dijkgraaf-Witten term. While $\langle \Phi\rangle$ is not invariant under the global $\U$ symmetry, the change can be compensated by applying a gauge transformation. Thus the $\U$ symmetry is not broken.  The resulting TQFT is either the $\bZ_2$ toric code for $k$ even, or the double semion for $k$ odd. 
 
The above analysis can be directly carried over to the Higgsing of the full ${\rm SO}(4)_{-k,k}$ theory, with the Higgs field in the isospin-1 representation of the right ${\rm SU}(2)_{\rm R}$.  For $\kappa<0$, the gauge group is broken down to $\frac{{\rm SU}(2)_{\rm L}\times \U}{\bZ_2}={\rm U}(2)$.
For $\kappa>0$, the unbroken gauge group is ${\rm SU}(2)_{\rm L}$. 
The topological order after Higgsing is simply
$\mathrm{SU}(2)_{-k}$.

We can now go back to the full Lagrangian and study the phase diagram. We find that when $r>0$, $\Phi$ is massive and can be integrated out. The TQFT is just $\T$.

When $r<0$, $\Phi$ condenses and Higgses ${\rm SO}(4)$. Depending on the sign of $\kappa$, there are two possibilities. If $\kappa<0$, we find 
\begin{equation}
    \T'=\frac{\T\boxtimes {\rm U}(2)_{-k,k}}{{\cal A}_{\B}},
\end{equation}
together with gapless Goldstone modes from the spontaneous symmetry breaking (SSB) of the global $\U$. Or if the Higgs field is real, the IR theory is the TQFT $\T'$.

If $\kappa>0$, we have 
\begin{equation}
    \T'=\frac{\T\boxtimes {\rm SU}(2)_{-k}}{{\cal A}_{\B}}.
\end{equation}

Let us give a concrete example. Set $k=4$, and $\T={\rm Spin}(12)_2$. It turns out that ${\rm Spin}(12)_2$ contains ${\rm SU}(2)_4^{\rm int}$ as a subcategory\footnote{Representative objects are the identity, the rank-2 symmetric tensor, and rank-4 anti-symmetric tensor.}. Our construction then leads to a transition between ${\rm Spin}(12)_2$ and $\T'=\U_4/\bZ_2={\rm Ising}^{\boxtimes 2}$.

Another way to understand this transition is to use an alternative representation of ${\rm Spin}(12)_2$ as\footnote{To see where this is from, note that gauging the anomaly-free $\bZ_2$ one-form symmetry in ${\rm Spin}(12)_2$ gives ${\rm SU}(12)_1\simeq {\rm SU}(3)_1\boxtimes \U_4$ (as TQFT). To recover we can first gauge $\bZ_2$ charge conjugation symmetry in each factor, and then take the diagonal $\bZ_2$ quotient. For ${\rm SU}(3)_1$, gauging charge conjugation leads to ${\rm SU}(2)_4$. For $\U_4$, gauging results in ${\rm Ising}^{\boxtimes 2}$. Taking the diagonal quotient then gives Eq. \eqref{eq:spin12}.}
\begin{equation}
    {\rm Spin}(12)_2\simeq \frac{{\rm SU}(2)_4\boxtimes {\rm Ising}\boxtimes {\rm Ising}}{(2, \psi,\psi)}.
    \label{eq:spin12}
\end{equation}
Hence our construction essentially isolates the ${\rm SU}(2)_4$ factor and Higgses it, leaving ${\rm Ising}^{\boxtimes 2}$.

A Lagrangian presentation of $\Z({\rm SU}(2)_k^{\rm int})$ is not unique. Below we consider a different presentation, for $k=2\pmod 4$. Recall the following fact: denote by $\cM$ a minimal modular extension of $\B$. For $k=2\pmod 4$, we can parametrize $\cM$ as follows~\cite{Bruillard2017}:
\begin{equation}
    \cM_m=\frac{{\rm SU}(2)_k \boxtimes {\rm Spin}(m)_1}{(k/2, \psi)}.
\end{equation}
Hence $\cM_m$ can be represented as a CS theory with gauge group $\frac{{\rm SU}(2)\times {\rm Spin}(m)}{\bZ_2}$ (where the $\bZ_2$ is the diagonal center). We then have
\begin{equation}
    \Z({\rm SU}(2)_k^{\rm int})=\frac{\cM_m\boxtimes \bar{\cM}_m}{(\psi, \psi)}.
\end{equation}

Next we introduce a complex scalar $\Phi^{aI}$, where $a=1,2,3$ and $I=1,2,\dots, m$, which is a bi-vector under $\mathrm{SU}(2)\times\mathrm{Spin}(m)$.  The representation is invariant under both centers.
We assume from the outset a global $\mathrm{U}(1)$ symmetry which rotates the phase of $\Phi^{aI}$.

Let us now analyze the potential of $\Phi$. First of all, the only quadratic term invariant under all (gauge and global) symmetries is $r|\Phi^{aI}|^2$. The quartic terms are chosen such that when $r<0$, the VEV takes the form
\begin{equation}
    \left\langle\Phi^{aI}\right\rangle \propto (1,i,0)^a \delta^{Im}.
    \label{eq:PhiVEV2}
\end{equation}
A detailed description of the quartic terms can be found in Appendix \ref{sec:bivector-potential}.

For the two factors, the Higgsing therefore gives
\begin{equation}
    \mathrm{SU}(2)_k\rightarrow \bZ_2 \text{ TC},
    \quad
    \mathrm{Spin}(m)_1\rightarrow\mathrm{Spin}(m-1)_1.
\end{equation}
Consequently, after Higgsing the $\mathcal M_m$ factor becomes $\mathrm{Spin}(m-1)_1$.

Since the $\overline{\mathcal M}_m$ factor is not Higgsed, we find
\begin{equation}
    \frac{{\cal M}_m\boxtimes \overline{\cal M}_{m}}{\bZ_2}\rightarrow \frac{{\rm Spin}(m-1)_1\boxtimes {\rm SU}(2)_{-k}\boxtimes {\rm Spin}(m)_{-1}}{\bZ_2\times \bZ_2}=\frac{ {\rm SU}(2)_{-k}\boxtimes {\rm Spin}(1)_{-1}}{\bZ_2}=\overline{\cal M}_{1}.
\end{equation}

Putting these factors together, we obtain~\footnote{We notice that the transition is independent of $m$. One possible scenario is that theories with different $m$ flow to the same IR fixed point. It is perhaps enlightening to consider a related example: suppose $m>1$ is an odd integer. Consider the $\big[{\rm Spin}(m)_1\boxtimes {\rm Spin}(m)_{-1}\big]/\bZ_2$ CS theory coupled to a bi-vector scalar field. When the scalar is massive, this theory describes a $\bZ_2$ toric code with $\bZ_2$ self-duality symmetry~\cite{Ji:2026yfj}. Now imagine condensing the scalar,  with the VEV proportional to the identity. The unbroken gauge group is $\bZ_2\times {\rm SO}(m)$, with a vanishing CS level. This theory describes a $\bZ_2$ toric code, but the self-duality symmetry is spontaneously broken~\cite{Ji:2026yfj}. This transition can also be described using the Ising critical theory. Hence it is likely that the field theories with different $m$ all flow to the Ising CFT in IR, together with the $\bZ_2$ toric code TQFT. In this case, the IR duality can be derived using the non-Abelian bosonization duality proposed in \cite{Metlitski_duality_2017}. }
\begin{align}
    \T' &= \frac{\T\boxtimes \overline{\cM}_1}{\A_\B}.
\end{align}

If $\T$ is ${\rm SU}(2)_k$, we have $\T'={\rm Spin(1)}_{-1}=\overline{\rm Ising}$, so the transition is from ${\rm SU}(2)_k$ to $\overline{\rm Ising}$. With the global $\U$ symmetry, the theory gives a direct transition between the two phases.

\subsubsection*{${\rm SU}(2)_k$ as the Higgs phase}

We now consider the scenario that ${\rm SU}(2)_k$ emerges as the Higgs phase of a CSH theory. 

More concretely, suppose there is a CSH theory that connects $G_{k'}$ and ${\rm SU}(2)_{k}$, where $G$ is a compact Lie group with ${\rm SU}(2)$ as a proper subgroup. Compatibility with the SymTFT bulk requires that the CSH theory has an exact $\bZ_2$ one-form symmetry. This can be achieved, if the embedding of ${\rm SU}(2)$ into $G$ is central (i.e. the image of the ${\rm SU}(2)$ center remains central and nontrivial), and the Higgs field is neutral under this $\bZ_2$ center. Once these conditions are satisfied, we have the following CSH transition for the slab alone: 
\begin{equation}
    {\rm SO}(4)_{-k,k}=\frac{{\rm SU}(2)_{-k}\boxtimes{\rm SU}(2)_k}{\bZ_2}\longleftrightarrow \frac{{\rm SU}(2)_{-k}\boxtimes G_{k'}}{\bZ_2}.
\end{equation}

Suppose a TQFT $\T$ has a subcategory isomorphic to ${\rm SU}(2)_{k'}^{\rm int}$, then we immediately find the following transition:
\begin{equation}
    \frac{\T \boxtimes {\rm SU}(2)_{-k}\boxtimes G_{k'}}{\A_{\B}\times \bZ_2}\longleftrightarrow  \T.  
\end{equation}

For an example, consider $G={\rm SU}(2)\times {\rm SU}(2)$, with levels $k_1, k-k_1$. A real bifundamental Higgs field $\Phi$ can break $G$ down to the diagonal ${\rm SU}(2)$ subgroup. It is easy to see that the $\bZ_2$ center is preserved. Hence we have a transition field theory for
\begin{equation}
    \frac{\T \boxtimes {\rm SU}(2)_{-k}\boxtimes {\rm SU}(2)_{k_1}\boxtimes {\rm SU}(2)_{k-k_1}}{\A_{\B}\times \bZ_2}\longleftrightarrow  \T.  
\end{equation}
Here the $\bZ_2$ is the diagonal center of the three ${\rm SU}(2)$ CS theories.

We can write down the full Higgs potential explicitly. It is convenient to think of $\Phi$ as a real SO(4) vector\footnote{Since $\Phi$ must be invariant under the diagonal center, there is no difference between ${\rm SU}(2)\times {\rm SU}(2)$ and SO(4) when discussing $\Phi$.}. Then the potential is the standard one:
\begin{equation}
U(\Phi)=r\Phi^2 + u(\Phi^2)^2, \quad u>0.
\end{equation}
When $r<0$, the gauge symmetry is Higgsed to SO(3), which is in fact the diagonal ${\rm SU}(2)$ once the diagonal $\bZ_2$ center is accounted for.

We describe another example where ${\rm SU}(2)_{10n}$ arises as the Higgs phase of ${\rm Spin}(5)_{n}$ CSH theory in Sec. \ref{sec:gauging}.

\subsection{Rep($G$)}
\label{sec:repG}

Another family of examples arises when $\B={\rm Rep}(G)$, where $G$ is a finite group. The SymTFT construction has a ``minimal" topological boundary, i.e. the boundary has no additional lines other than ${\rm Rep}(G)$ from the 4d bulk. Viewing the bulk as a $G$ gauge theory, a natural class of boundary dynamics is to introduce scalars transforming in irreducible representations of $G$.

In this case, it is more convenient to use an alternative presentation of $\T$, as follows: one first gauges the (non-invertible) one-form symmetry Rep$(G)$, resulting in a new TQFT $\T'=\T/{\rm Rep}(G)$. As a result of the gauging, $\T'$ is equipped with the dual 0-form $G$ symmetry\footnote{$G$ may or may not act faithfully in $\T'$. When the $G$ action is not faithful, the TQFT $\T'$ is enriched by the normal subgroup that acts trivially.}. Therefore, $\T$ is equivalent to $\T'$ coupled to a dynamical $G$ gauge field, which we denote by $\T'_G$.

In this presentation, it is natural to study the transition to $\T'_H$ for a subgroup $H\subseteq G$. This can be achieved by introducing a scalar field transforming as some representation of $G$, and a suitable potential so that the stabilizer group of the minimum is $H$. Locally, the transition field theory is completely equivalent to the Higgs transition from the $G$ gauge theory to the $H$ gauge theory. Upon gauging ${\rm Rep}(G)$, it is also equivalent to the $G\rightarrow H$ spontaneous symmetry breaking transition. 

This type of transition implements the (non-invertible) gauging in $\T$. For each $H\subset G$ up to conjugacy, one can associate a condensable algebra $\A_H$ in ${\rm Rep}(G)$, such that $\T/\A_H=\T'_H$. More explicitly, we have~\cite{Lan:2016rcq}
\begin{equation}
    \A_H={\rm Fun}(G/H)=\bigoplus_{\pi\in{\rm Irr}(G)} ({\rm dim}\, \pi^H)\pi,
\end{equation}
where $\pi^H$ is the $H$-invariant subspace of $\pi$.
For example, when $H=\bZ_1$, the algebra $\A_H=\bigoplus_{\pi \in {\rm Irr}(G)}({\rm dim}\,\pi)\pi$. 

It is instructive to consider a concrete example $G=\mathbb{D}_{2n}=\bZ_n\rtimes\bZ_2$. For simplicity we assume $n\geq 3$ is odd. Let us study the following subgroups: $H=\bZ_1, \bZ_2$ and $\bZ_n$.  Below, denote by $\pi_0$ the identity representation, $\pi_-$ the sign representation, and $\pi_{q}$ the 2-dimensional irrep carrying SO(2) charge $\pm q$, for $q=1,2,\cdots, \frac{n-1}{2}$. They label bosonic charges in $\T$.

\begin{enumerate}
    \item $H=\bZ_n$. This transition is driven by the condensation of the $\pi_-$. If we represent $\T$ as $\T/\bZ_2$ coupled to a dynamical $\bZ_2$ gauge field, the transition is equivalent to the $\bZ_2$ Higgs transition. In this case, the proliferating anyons are the ${\rm Rep}(\bZ_2)$ subcategory generated by $\pi_-$.
    \item $H=\bZ_2$. After gauging the ${\rm Rep}(\mathbb{D}_{2n})$ symmetry, the transition is equivalent to the $\mathbb{D}_{2n}\rightarrow \bZ_2$ SSB transition. We can realize it by the following Ginzburg-Landau theory of a complex scalar:
        \begin{equation}
            {\cal L}_{\bZ_2}=|D\psi|^2+r|\psi|^2 + u|\psi|^4+\lambda(\psi^n+\bar{\psi}^n).
        \end{equation}
This is just the XY theory with $\bZ_n$ anisotropy, coupled to a dynamical $\mathbb{D}_{2n}$ gauge field. For $n=3$, it is well-known that the anisotropy term is relevant and the transition becomes first-order. For $n\geq 5$, the anisotropy term is irrelevant at the XY fixed point, so the transition belongs to the gauged XY universality class. Note that the anisotropy is dangerously irrelevant, as it breaks the (emergent) $\U$ symmetry and makes the $r<0$ phase gapped.

This transition implements the gauging of $\A_{\bZ_2}=\bigoplus_{q=0}^{(n-1)/2}\pi_q$. We note that $\pi_q$ with $q=0, \dots, \frac{n-1}{2}$ are not closed under fusion. Let us consider which charges proliferate near the transition. At the transition, the order parameter $\psi$, transforming as $\pi_{q=1}$, has power-law correlations. The same is true for all its powers $\psi^\ell$, for $\ell\in \mathbb{N}$. However, ${\cal O}_-=i(\psi^n-\bar{\psi}^n)$ transforming in the sign representation $\pi_-$ is also gapless, even though $\pi_-$ does not appear in the algebra $\A_{\bZ_2}$. Thus without further qualification, one concludes that the full ${\rm Rep}(\mathbb{D}_{2n})$ ``proliferate" at the transition.  

\item $H=\bZ_1$. Generically, the $\mathbb{D}_{2n}\rightarrow \bZ_1$ SSB transition happens in two steps: $\mathbb{D}_{2n}\rightarrow \bZ_2\rightarrow \bZ_1$. To describe it we add to ${\cal L}$ a real scalar $\phi$:
\begin{equation}
    {\cal L}_{\bZ_1}={\cal L}_{\bZ_2}+(D\phi)^2+ r' \phi^2+ u' \phi^4+ w|\psi|^2\phi^2+ig\phi(\psi^n-\bar{\psi}^n)+\cdots.
\end{equation}
A direct transition requires $r=r'=0$, hence in general multicritical. If the multicritical point does exist, then both $\phi$ and $\psi$ become light, and the entire ${\rm Rep}(\mathbb{D}_{2n})$ category proliferates as well.
\end{enumerate}

Let us now consider the more general case. As discussed above, gauging $\A_H$ can be realized by a proliferation transition, which upon gauging ${\rm Rep}(G)$ is dual to a symmetry-breaking transition $G\rightarrow H$. If the IR limit of the dual transition is described by a CFT, then under reasonable general assumptions about the symmetry action, in particular that $G$ acting faithfully on local operators, one can show generally that there are primary operators of the CFT transforming in all finite-dimensional irreducible representations of $G$~\cite{Harlow:2018tng}. If $G$ does not act faithfully, then the same statement applies to the faithful quotient $\tilde{G}=G/{\rm ker}(G)$. After gauging $G$, a local operator transforming in an irreducible representation becomes an operator attached to the Wilson line labeled by the representation. Its critical correlations imply that the corresponding anyon becomes light, i.e. proliferates. Therefore, the transitions constructed in this way always proliferate the ${\rm Rep}(\tilde{G})$ (sub)category, even when the objects in the condensable algebra $\A_H$ do not form a subcategory.

\section{Gauging as proliferation}
\label{sec:gauging}

As mentioned in the Introduction, gauging (invertible or non-invertible) one-form symmetry in a TQFT can often be realized as proliferation transitions. In fact, such gauging has been called ``anyon condensation" in recent literature~\cite{Bais:2008ni, Kong:2013aya, burnell_anyon_2018}, which implies the ``condensation" of bosonic anyons involved in the gauging as a proliferation transition, not just a mathematical operation on TQFTs. However, the general question of whether gauging can always be realized by a proliferation transition (using an explicit field theory) remains an open question.

We first summarize known field-theory realizations of one-form gauging: 
\begin{itemize}
    \item When the one-form symmetry is invertible (equivalently, generated by Abelian bosons), its gauging can always be implemented as a proliferation transition. An explicit field theory is described in \cite{Cheng:2026qax}.  
\item Gauging of Rep$(G)$ can also be realized as a proliferation transition, as shown in Sec. \ref{sec:repG} up to details of the potential energy for the scalar field. 
\item Given a MTC $\B=G_k$ for $G$ a simply-connected simple Lie group, gauging the Lagrangian algebra $\A_\B$ (see Eq. \eqref{eq:AB}) in the double $\Z(\B)=G_k\boxtimes G_{-k}$ theory can often be implemented using a bi-fundamental Higgs scalar of $G\times G$: the Higgs potential can be chosen so that the residual gauge group is the diagonal $G$. Since the CS level now vanishes, the IR limit is governed by pure Yang-Mills dynamics, which is expected to flow to a trivial gapped phase~\cite{Nair:2002, Teper:1998te, AthenodorouTeper:2017}. Thus the result agrees with the gauging of $\A_\B$. Such CSH theories with $G={\rm SU}(2)$ have been recently applied to study transitions in self-dual lattice gauge theories~\cite{Ji:2026yfj, Lu:2026fid}.
\end{itemize}

     We now add to the list a new example of non-invertible one-form gauging realized by a CSH theory. We then discuss its possible generalizations. 

     The specific example we will focus on is gauging in ${\rm SU}(2)_{10}$. We label SU(2) irreps by $V_j$ where $j=0, \frac12, 1, \dots$ is the SU(2) spin. This theory has a condensable algebra $\A=V_0\oplus V_3$~\cite{kirillov2002q, Bais:2008ni, Eliens:2013epa}. Here $V_3$ is a non-Abelian boson with quantum dimension $2+\sqrt{3}$. Gauging $\A$ leads to ${\rm Spin}(5)_1$. The gauging can be understood from the perspective of rational conformal field theory: ${\rm Spin}(5)_1$ can be obtained from ${\rm SU}(2)_{10}$ by chiral algebra extension~\cite{Moore:1989yh}, by adding the isospin-3 primary to the chiral algebra. In other words, there is a conformal embedding ${\rm SU}(2)_{10}\subset {\rm Spin}(5)_1$ \cite{Schellekens:1989am, SchellekensWarner:1986}.

Now we discuss a transition field theory for this gauging. Instead of starting from the ${\rm SU}(2)_{10}$ side, we approach the transition from ${\rm Spin}(5)_1$. Note that the embedding of ${\rm SU}(2)$ into ${\rm Spin}(5)$ is principal, with an embedding index 10. We introduce a real Higgs scalar ${\Phi}$ transforming in the rank-3 symmetric traceless tensor representation ${\bf 30}$ of SO(5).

One can show that i) there is a VEV of the scalar, whose stabilizer group is precisely the principal ${\rm SU}(2)$ subgroup of ${\rm Spin}(5)$. The existence of such a minimum can be understood from the branching rule ${\bf 30}\rightarrow V_0\oplus V_3\oplus V_4 \oplus V_6$: the desired VEV is the vector in the $V_0$ subspace. ii) the VEV is the stable minimum of a Higgs potential,  whose detailed construction is given in Appendix \ref{sec:cartan}. Here we give a schematic form of the potential:
\begin{align}
    U(\Phi)=r\rho+u\rho^2 +& \lambda |C(\Phi)|^2 + \eta I_5(\Phi) + \kappa \rho^3,\\
    u, \: &\lambda, \: \kappa>0.
\end{align}
$\rho=\Phi^2$ is the full contraction of the tensor $\Phi$. The quartic term $|C(\Phi)|^2$ and the quintic term $I_5(\Phi)$ are defined in Appendix \ref{sec:cartan}. When $\eta$ and $\kappa$ are sufficiently small, $|C(\Phi)|^2$ and $I_5(\Phi)$ together select the desired vacua that break ${\rm Spin}(5)$ down to ${\rm SU}(2)$.  The quintic term is necessary to break the $\bZ_2$ symmetry $\Phi\rightarrow -\Phi$.
Importantly, the potential realizes a direct transition at the semiclassical level. 

It is also enlightening to analyze the excitations in both phases, at least semiclassically.

In the ${\rm Spin}(5)_1$ phase, the $\mathbf{30}$ representation belongs to the same superselection as the vector representation, which is the neutral fermion. Hence it is quite plausible that the transition may be associated with the proliferation of fermions.

Now we go to the Higgs phase. There are two kinds of excitations: the Higgs scalar, and the gauge fluctuations. Let us first ignore the gauge field, and consider the Goldstone modes for the symmetry breaking. The vacuum manifold is ${\rm Spin}(5)/{\rm SU}(2)$, which is 7-dimensional. In the branching of ${\bf 30}$, there is precisely one 7-dimensional subspace with $V_3$, which we identify as the Goldstone modes. The Spin(5) gauge fields transform in the Spin(5) adjoint $\mathbf{10}$, which branches under the symmetry breaking as $\mathbf{10}\rightarrow V_1\oplus V_3$. The $V_1$ representation corresponds to the unbroken SU(2) gauge field, while $V_3$ combines with the Goldstone modes to become massive W bosons. 
This is also consistent with $V_3$ being in the condensable algebra, as the ${\rm Spin}(5)$ gauge fluctuations  are in the identity superselection sector in ${\rm Spin}(5)_1$.

The remaining components of ${\bf 30}$ become massive Higgs particles. For example, the $V_0$ subspace corresponds to the ``radial" mode. Semiclassically, they should all become light near the transition. For instance, it is plausible that the $V_4$ Higgs particle becomes the isospin-4 anyon of ${\rm SU}(2)_{10}$, as it can be attached to the corresponding Wilson line.  However, to actually determine the anyon type of these massive excitations, a better understanding of the IR dynamics is needed, to resolve issues such as possible dressing by $W$ bosons. We conjecture that in fact all integer-spin anyons of ${\rm SU}(2)_{10}$ become light near the transition.

It is evident that the same construction may apply to other conformal embeddings $H_{\tilde{k}}\subset G_k$. As already discussed in Sec. \ref{sec:CSHreview}, for a given embedding $H\subset G$ the Mostow-Palais theorem provides a representation, which contains a vector that stabilizes $H$. The remaining task is to construct a potential whose minima are exactly the corresponding gauge orbit, which is left for future work.

\section{Conclusions and discussion}

In this work we propose a general framework for transition field theories to describe phase transitions out of a (2+1)d topological phase $\T$, driven by dynamical anyons in a braided fusion subcategory $\B$. The construction can be interpreted in the SymTFT perspective as phase transitions on the ``symmetry boundary". We study several examples of these proliferation transitions using CSH theories, including $\B={\rm SU}(2)_k^{\rm int}$ and ${\rm Rep}(G)$. In the latter case, we discuss proliferation transitions between two TQFTs related by gauging non-invertible one-form symmetry, and show that the proliferating anyons form a braided fusion subcategory. We also construct an explicit CSH realization of the non-invertible one-form gauging ${\rm SU}(2)_{10}\rightarrow {\rm Spin}(5)_1$. 

We close by discussing some open questions.

Most crucially, the fundamental identity Eq. \eqref{symtft} underlying the construction remains a conjecture in the most general form, although it has been established in several important cases.

Our focus on a braided fusion subcategory is motivated by the expectation that proliferating anyons should be closed under fusion. It would be valuable to determine whether this picture holds in general, for example in CSH theories. In fact the basic notion of ``proliferation of anyons" remains heuristic and requires a better formulation to address this question.

It is also important to understand whether every one-form gauging can be realized as a proliferation transition. A concrete step in this direction would be to generalize the CSH construction in Sec. \ref{sec:gauging} to other examples of conformal embeddings, such as ${\rm SU}(2)_{28}\subset ({\rm G}_2)_1$. This appears to be a well-defined model-building problem. 

While we have mainly considered continuum field theories, proliferation transitions have also been investigated in lattice models. In particular, exactly solvable models, such as string-net and Kitaev's quantum double models, can be perturbed to drive such phase transitions (e.g. see \cite{BurnellSimonSlingerland2012, Zhao:2024ilc}). Some examples have also been studied numerically~\cite{Tupitsyn:2008ah, WuDengProkofev2012, SomozaSernaNahum2021, BonatiPelissettoVicari2022, SchulzDusuelSchmidtVidal2013, SchulzDusuelMisguichSchmidtVidal2014}. Connecting transition field theories to microscopic lattice models is an important direction for future work,  see \cite{Ji:2026yfj, Dumitrescu:2026vre, Lu:2026fid} for recent examples.

\section*{Acknowledgements}

I would like to thank Nathan Seiberg for previous collaborations that inspired this work, and for extensive conversations on related topics. I'm grateful to Terry Gannon, Wenjie Ji, Dachuan Lu, Brandon Rayhaun and Sahand Seifnashri for enlightening discussions, and Yunqin Zheng and Sakura Schafer-Nameki for communications on their related work \cite{Schafer-Nameki:2026hgx}. ChatGPT 5.6 was used to assist various group theory calculations, and in particular suggested the Higgs potential in Appendix \ref{sec:cartan} and the proof strategy for Theorem \ref{theorem:cartan}. I thank the Institute for Advanced Study for hospitality. This work is partly supported by NSF Grant No. DMR-2424315. 

\appendix

\section{Higgs potentials}
In this appendix we provide additional details about Higgs potentials in the CSH examples studied in the main text.

\subsection{${\rm SO}(3)\times {\rm SO}(m)$ bi-vector Higgs scalar}
\label{sec:bivector-potential}

Here we give the potential that selects the minimum in Eq. \eqref{eq:PhiVEV2} for the ${\rm SU}(2)\times {\rm Spin}(m)$ CSH theory. Since the Higgs field is in the bi-vector representation, only the ${\rm SO}(3)\times {\rm SO}(m)$ quotient acts faithfully.

Define the quadratic tensors
\begin{align}
    N=|\Phi^{aI}|^2, \quad
    Q^{IJ}=\Phi^{aI*}\Phi^{aJ},\quad
    P^{IJ}=\Phi^{aI}\Phi^{aJ},\quad
    R^{ab}=\Phi^{aI}\Phi^{bI}.
\end{align}
All repeated indices are summed.  A convenient
$\mathrm{U}(1)\times\mathrm{SO}(3)\times\mathrm{SO}(m)$-invariant
quartic potential is
\begin{align}
    U(\Phi)
    ={}&rN+\lambda N^2
    +\alpha\left(N^2-|Q|^2\right)
    +\beta\left(N^2-|R|^2\right)
    +\gamma|P|^2.
    \label{eq:bifundamental-potential}
\end{align}
Here $|Q|^2=Q^{IJ}Q^{JI}, |P|^2=P^{IJ*}P^{IJ}, |R|^2=R^{ab*}R^{ab}$.  The last three quartic expressions are nonnegative.  We assume that $\lambda, \alpha, \beta, \gamma$ are all positive.

For $r>0$, the minimum is $\Phi=0$.

For $r<0$, the key is the $N^2-|Q|^2$ term. We will show that if $N^2=Q^{IJ}Q^{JI}$, $\Phi$ when regarded as a complex $3\times m$ matrix has complex rank one. To see this, first we notice that from the Cauchy-Schwarz inequality
\begin{equation}
    |Q^{IJ}|^2=|\bar{\Phi}^{aI}\Phi^{aJ}|^2 \leq \sum_a |\Phi^{aI}|^2\sum_b |\Phi^{bJ}|^2=N^IN^J.
\end{equation}
Here $N^I = \sum_a |\Phi^{aI}|^2$. Therefore 
\begin{equation}
    |Q|^2=\sum_{I,J}|Q^{IJ}|^2\leq \sum_{I,J}N^IN^J=\left(\sum_I N^I\right)^2=N^2.
\end{equation}
Thus the equality is reached when $|Q^{IJ}|^2=N^IN^J$, which means that $\Phi^{aI}$, as a $3$-dimensional complex vector, is parallel to $\Phi^{aJ}$. Hence the column vectors of $\Phi$ are all parallel, and the matrix has (complex) rank one.

We may therefore write $\Phi^{aI}=z^a w^I$, and the $\beta$ term becomes
\begin{equation}
    \beta|z|^4 \left(|w|^4-|w\cdot w|^2\right).
\end{equation}
It is minimized when $w$ is real up to an overall phase.  Absorbing that
phase into $z$, we can write $w=n$ with $n\in\mathbb R^m$ and
$n\cdot n=1$.  The $\gamma$ term then becomes $|z\cdot z|^2$, which imposes $z\cdot z=0$.
An $\mathrm{SO}(3)\times\mathrm{SO}(m)$ gauge
transformation therefore brings the VEV to
\begin{equation}
    \left\langle\Phi^{aI}\right\rangle
    =v_0(1,i,0)^a\delta^{Im},
    \qquad
    v_0^2=-\frac{r}{4\lambda}.
\end{equation}
Thus \eqref{eq:bifundamental-potential} selects precisely the complex
rank-one Higgs configuration assumed in the transition.

\subsection{${\rm SO}(5)$ ${\rm Sym}^3_0(\mathbf{5})$ Higgs scalar}
\label{sec:cartan}

We first explicitly describe the VEV that realizes the ${\rm SO}(5)\rightarrow {\rm SO}(3)$ symmetry-breaking pattern, where the embedding of ${\rm SO}(3)$ is principal.

Recall that the rank-2 symmetric traceless tensor representation of SO(3) is 5-dimensional: $\mathbb{R}^5={\rm Sym}^2_0(\mathbb{R}^3)$. In other words, the vector representation of SO(5) is identified with the isospin-2 representation of SO(3). This defines the principal embedding ${\rm SO}(3)\hookrightarrow {\rm SO}(5)$. The SO(3) action is $Q\rightarrow RQR^{\rm T}$ for $Q\in {\rm Sym}^2_0(\mathbb{R}^3)$. We choose a basis $\{T_a\}$ for this space, where $a=1,2,3,4,5$. They are normalized such that $\Tr T_aT_b=\delta_{ab}$. Since $\sum_a T_a^2$ is invariant under irreducible SO(3) action, it must be a scalar: $\sum_a T_a^2=\frac53\mathbf{1}$.

Define a rank-3 symmetric tensor
\begin{equation}
    d_{abc}=\Tr T_{(a}T_bT_{c)}.
\end{equation}
Here $(\cdots)$ means symmetrizing over the indices. Hence by definition $d_{abc}$ is symmetric. Now we check the trace:
\begin{equation}
 d_{aab}=\Tr\left[\left(\sum_aT_a^2\right)T_b\right]
 =\frac53\Tr T_b=0.
\end{equation}
Next we show that $d_{abc}$ is a singlet under SO(3). In fact, it is useful to define $d$ in the following way: expand $Q$ as $Q=x_aT_a$, then
\begin{equation}
    \Tr Q^3= d_{abc}x_ax_bx_c.
\end{equation}
It then immediately follows from $\Tr (RQR^{\rm T})^3=\Tr Q^3$ that $d$ is invariant under SO(3).

So far we have shown that SO(3) stabilizes $d$. Denote this principal SO(3) by $H_0$. We still need to show that the stabilizer group $H_d={\rm Stab}_d(\mathrm{SO}(5))$ is actually $H_0$. Since ${\rm SO}(3)$ is a maximal connected subgroup, $H_d$ must be a finite extension of $H_0$. As a result, $H_d$ must satisfy $H_d\subset N(H_0)$, where $N(H_0)$ is the normalizer group. For any $h\in N(H_0)$, $hH_0h^{-1}$ is an automorphism of $H_0$. Since SO(3) has no nontrivial outer automorphism, $hH_0h^{-1}$ must be an inner automorphism. It means that there exists some $h_0\in H_0$ such that $h_0h$ commutes with $H_0$. In other words, $h_0h$ is in the centralizer of $H_0$. However, $H_0$ acts irreducibly on the SO(5) vector representation, and by Schur's lemma the centralizer of $H_0$ must be a scalar, which has to be +1 as the group is SO(5). This proves that $N(H_0)=H_0$, and $H_d=H_0$ as well. Finally, the lift of the principal SO(3) in SO(5) to Spin(5) is SU(2).

Next we proceed to the construction of the Higgs potential.

Recall that the Higgs field $\Phi_{abc}$ is fully symmetric and traceless:
\begin{equation}
    \Phi_{abc}=\Phi_{(abc)}, \quad \Phi_{aab}=0.
\end{equation}
First we define the quadratic invariant 
\begin{equation}
    \rho(\Phi) = \Phi_{abc}\Phi_{abc}.
\end{equation}
To define the quartic term, first we define 
\begin{equation}
    S_{bcde}(\Phi)=\frac13\big(\Phi_{abc}\Phi_{ade} + \Phi_{abd}\Phi_{ace}+\Phi_{abe}\Phi_{acd} \big)= \Phi_{a(bc}\Phi_{de)a}.
\end{equation}
Then we remove the scalar component $S_{bcde}\delta_{(bc}\delta_{de)}=\frac23 \rho$:
\begin{equation}
    C_{bcde}(\Phi)=S_{bcde}(\Phi)-\frac{2}{35}\rho(\Phi) \delta_{(bc}\delta_{de)}.
\end{equation}
$C$ is a rank-4 symmetric tensor.

Next we show that:
\begin{theorem}
    The non-zero solutions of $C_{bcde}(\Phi)=0$ are precisely the SO(5) orbit of $\pm \sqrt{\frac{\rho(\Phi)}{\rho(d)}}d$.
    \label{theorem:cartan}
\end{theorem}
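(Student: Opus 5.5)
The approach is to translate the tensor equation $C_{bcde}(\Phi)=0$ into a statement about harmonic cubic polynomials on $\mathbb{R}^5$, and then run Cartan's classification argument for isoparametric cubics. To $\Phi$ associate $f_\Phi(x)=\Phi_{abc}x_ax_bx_c$. Since $\Phi$ is symmetric and traceless, $f_\Phi$ is a harmonic cubic, and $\Phi\mapsto f_\Phi$ is an $\mathrm{SO}(5)$-equivariant isomorphism $\mathrm{Sym}^3_0(\mathbb{R}^5)\cong \mathcal{H}_3(\mathbb{R}^5)$. Contracting $S$ with $x_bx_cx_dx_e$ gives
\begin{equation}
S_{bcde}x_bx_cx_dx_e=\Phi_{abc}\Phi_{ade}x_bx_cx_dx_e=\tfrac19|\nabla f_\Phi|^2 ,
\end{equation}
and $\delta_{(bc}\delta_{de)}x_bx_cx_dx_e=|x|^4$. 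A symmetric rank-4 tensor vanishes iff its quartic form does, so
\begin{equation}
C(\Phi)=0\iff |\nabla f_\Phi|^2=\tfrac{18}{35}\rho(\Phi)\,|x|^4 .
\end{equation}
After rescaling so that $|\nabla f|^2=9|x|^4$ and $\Delta f=0$, this is exactly Cartan's system for a cubic isoparametric polynomial. For $\Phi=d$ we have $f_d(Q)=\Tr Q^3$ on $\mathrm{Sym}^2_0(\mathbb{R}^3)$. I will first check directly, or via the $\mathrm{SO}(3)$-invariance already shown, that $|\nabla \Tr Q^3|^2\propto(\Tr Q^2)^2$. Together with the equivariance, this gives one inclusion: the whole orbit of $\pm d$, suitably rescaled, solves $C=0$.

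For the converse, take a nonzero solution normalized as above. Let $e$ be a maximum point of $f$ on the unit sphere $S^4$. Lagrange multipliers together with Euler's identity $x\cdot\nabla f=3f$ and $|\nabla f(e)|=3$ give $\nabla f(e)=3e$ and $f(e)=1$. Write $x=(y,t)$ with $t$ the coordinate along $e$ and $y\in\mathbb{R}^4$. Since the $t^2$ coefficient vanishes by $\nabla f(e)=3e$, we get
\begin{equation}
f=t^3+t\,q(y)+c(y),
\end{equation}
with $q$ quadratic and $c$ cubic. Harmonicity gives $6t+2t\Tr q+\Delta c=0$, so $\Tr q=-3$ and $\Delta c=0$. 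Next expand $|\nabla f|^2=(3t^2+q)^2+|t\nabla q+\nabla c|^2$ and compare with $9(t^2+|y|^2)^2$ order by order in $t$.
\begin{itemize}
\item The $t^2$ order gives $6q+|\nabla q|^2=18|y|^2$. Diagonalizing $q$ with eigenvalues $\mu_i$, this reads $6\mu_i+4\mu_i^2=18$, so $\mu_i\in\{3/2,-3\}$. The trace condition $\Tr q=-3$ with four eigenvalues forces $\mu=(3/2,3/2,-3,-3)$.
\item The $t^1$ order gives $\nabla q\cdot\nabla c=0$.
\item The $t^0$ order gives $q^2+|\nabla c|^2=9|y|^4$.
\end{itemize}
Splitting $y=(u,v)\in\mathbb{R}^2\oplus\mathbb{R}^2$ according to the eigenspaces of $q$, the $t^1$ condition says $c$ is annihilated by the Euler-type operator $3u\cdot\partial_u-6v\cdot\partial_v$. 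On monomials $u^\alpha v^\beta$ with $|\alpha|+|\beta|=3$ this operator acts by $3|\alpha|-6|\beta|$, so only $|\alpha|=2,|\beta|=1$ survives and $c$ is bilinear: $c=\sum_k v_k\,B_k(u)$ with $B_1,B_2$ quadratic forms in $u$. The $t^0$ condition then becomes a system of quadratic identities on $(B_1,B_2)$ — essentially that $v\mapsto \sum v_kB_k$ is an isometric embedding of $\mathbb{R}^2$ into traceless $2\times 2$ symmetric matrices, i.e.\ a Clifford-type system. I would show this forces $(B_1,B_2)$ to be, up to $\mathrm{O}(2)\times\mathrm{O}(2)$ acting on $(u,v)$, the standard pair $(u_1^2-u_2^2,\,2u_1u_2)$ times a fixed constant. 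The normal form of $f$ is then unique up to the stabilizer of $e$, and possibly up to $f\to -f$ when an improper reflection is needed to reach it. Since $\Tr Q^3$ itself satisfies the system, it has the same normal form, so $f$ lies in the $\mathrm{SO}(5)$-orbit of $\pm f_d$. Undoing the normalization gives the factor $\sqrt{\rho(\Phi)/\rho(d)}$.

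I expect the main obstacle to be the last step: solving the $t^0$ equation for $c$ and tracking orientations carefully enough to conclude that only $\mathrm{SO}(5)$ (rather than $\mathrm{O}(5)$) plus the overall sign is needed. Concretely, this means checking that the improper element of $\mathrm{O}(5)$ acts on $f_d$ as $-1$ up to $\mathrm{SO}(5)$, e.g.\ using $-\mathbf{1}\in \mathrm{O}(5)$, which sends $f\mapsto -f$. This is exactly why the statement contains the $\pm$. As a fallback, one could cite Cartan's classification of isoparametric cubics (dimensions $5,8,14,26$, with the $5$-dimensional one being $\Tr Q^3$ on real symmetric traceless $3\times3$ matrices) in place of the explicit computation.
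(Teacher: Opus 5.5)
Your proposal is correct, and its first half is the paper's argument. The paper also passes to $P_\Phi(x)=\Phi_{abc}x_ax_bx_c$, uses tracelessness to get $\Delta P_\Phi=0$, and uses $|\nabla P_\Phi|^2=9S_{bcde}x_bx_cx_dx_e$ to turn $C=0$ into $|\nabla P_\Phi|^2=\frac{18}{35}\rho|x|^4$, then rescales to Cartan's system.

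The routes differ at the classification step. The paper simply cites Cartan's theorem, in Tkachev's explicit normal form \eqref{eq:feqQ3}, to fix $f$ up to $\mathrm{O}(5)$. It then identifies that normal form with $\sqrt{6}\Tr Q^3$ through the explicit isometry $x\mapsto Q(x)$; this is your fallback. Your main route instead rederives Cartan's result for $n=5$ by the maximum-point expansion, and the step you flag as the obstacle does close. Write $B_k(u)=u^{\mathrm T}M_ku$, with $M_k$ traceless by $\Delta c=0$. The $t^0$ equation then splits by bidegree into $(\sum_k v_kM_k)^2=\frac{27}{4}|v|^2\mathbf{1}$ and $\sum_k B_k(u)^2=\frac{27}{4}|u|^4$, and the second follows from the first. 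So $(M_1,M_2)$ is $\frac{3\sqrt3}{2}$ times an orthonormal frame of traceless symmetric $2\times2$ matrices, and $\mathrm{O}(2)$ acting on $v$ alone brings it to the standard frame. The resulting normal form is exactly \eqref{eq:feqQ3} with $t=x_5$, $u=(x_1,x_2)$, $v=(-x_4,x_3)$.

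Your route buys self-containedness. It also gives an explicit proof of the inclusion ``orbit $\subseteq$ solutions'', which the paper leaves implicit, and an explicit reason for the $\pm$: $-\mathbf{1}\in\mathrm{O}(5)$ is improper and sends $f\mapsto -f$. That sign is not redundant. Running the paper's normalizer argument in $\mathrm{O}(5)$ gives $N_{\mathrm{O}(5)}(H_0)=H_0\times\{\pm\mathbf{1}\}$, and $-\mathbf{1}$ sends $d\mapsto -d$, so no improper element fixes $d$. The paper's route is shorter, and it gets the normalization $|\nabla(\sqrt6\Tr Q^3)|^2=9|x|^4$ for free from the matching. You must compute that normalization separately; it follows from $\Tr Q^4=\frac12(\Tr Q^2)^2$ for traceless symmetric $3\times3$ matrices $Q$.
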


\begin{proof}
We define the following cubic homogeneous polynomial 
\begin{equation}
    P_\Phi(x)=\Phi_{abc}x_ax_bx_c, \quad x\in \mathbb{R}^5.
\end{equation}
First we prove if $C(\Phi)=0$, then $P_{\Phi}(x)$ satisfies the following equations:
\begin{equation}
    \Delta P_\Phi(x)=0,\quad |\nabla P_\Phi(x)|^2=\frac{18}{35}\rho|x|^4.
\end{equation}
Both can be verified by direct computations. We have
\begin{equation}
    \partial_aP_\Phi=3\Phi_{abc}x_bx_c.
\end{equation}
Here $\partial_a\equiv \frac{\partial}{\partial x_a}$. Then
\begin{align}
    \Delta P_\Phi(x)=\partial_a\partial_a P_\Phi=6\Phi_{aab}x_b=0.
\end{align}
and
\begin{align}
    |\nabla P_\Phi|^2&=\partial_a P_\Phi \partial_a P_\Phi\\
                     &=9\Phi_{abc}\Phi_{ade}x_bx_cx_dx_e\\
                     &=9S_{bcde}x_bx_cx_dx_e\\
                     &=9\big(C_{bcde} + \frac{2}{35}\rho \delta_{(bc}\delta_{de)}\big)x_bx_cx_dx_e\\
                     &=\frac{18}{35}\rho|x|^4.
\end{align}
In the Higgs phase, $\rho>0$, so we define 
\begin{equation}
    f(x)=\sqrt{\frac{35}{2\rho}}P_\Phi(x).
\end{equation}
It satisfies the isoparametric equations:
\begin{equation}
|\nabla f(x)|^2=9|x|^4,\quad \Delta f(x)=0.
\end{equation}
A theorem due to \'E. Cartan \cite{Cartan:1939Isoparametric} fixes the form of $f(x)$ up to an O(5) rotation:
\begin{equation}
    f(x)=x_5^3+\frac32 x_5(x_1^2+x_2^2-2x_3^2-2x_4^2)+\frac{3\sqrt{3}}{2}\big[x_4(x_2^2-x_1^2)+2x_1x_2x_3\big] = \sqrt{6}\Tr Q^3(x).
    \label{eq:feqQ3}
\end{equation}
This explicit form is given in \cite{Tkachev:2010Cartan}.  Here 
\begin{equation}
    Q(x)= \frac{1}{\sqrt{6}}
\begin{pmatrix}
-\sqrt{3}x_4-x_5 & \sqrt{3}x_3 & \sqrt{3}x_1 \\[2mm]
\sqrt{3}x_3 & \sqrt{3}x_4-x_5 & \sqrt{3}x_2 \\[2mm]
\sqrt{3}x_1 & \sqrt{3}x_2 & 2x_5
\end{pmatrix}
\end{equation}
It is obvious that $Q$ is symmetric and traceless. The normalization is chosen such that $\Tr Q^2=|x|^2$. Hence $Q(x)$ defines an isometric map from $\mathbb{R}^5$ to ${\rm Sym}_0^2(\mathbb{R}^3)$.

Now by comparing Eq. \eqref{eq:feqQ3} with the definition of $d$, we immediately see that $\Phi$ belongs to the same SO(5) orbit as $\pm d$ (up to the radial scale).
\end{proof}

We can thus use the following Higgs potential:
\begin{equation}
    U(\Phi)=r\rho+u\rho^2 + \lambda |C|^2, \quad u,\lambda>0.
\end{equation}
Here $|C|^2=C_{abcd}C_{abcd}$.  When $r<0$, the first two terms fix $\rho=-\frac{r}{2u}$. The last term is then minimized when $\Phi\propto \pm d$.

This is almost what we want, with one caveat: the Lagrangian has a $\bZ_2$ global symmetry $\Phi\rightarrow -\Phi$, which is now spontaneously broken in the Higgs phase. In order to lift this degeneracy, we need to add an O(5) pseudoscalar, i.e. a ${\rm SO}(5)$-invariant polynomial containing only terms of odd degree. A little representation theory shows that the first such term is quintic, because $\mathbf{30}^{\otimes 3}$ does not contain any singlet, while ${\rm Sym}^5(\mathbf{30})$ does. In fact there is a unique quintic term, which we denote by $I_5$. The actual expression of $I_5$ is not important to us. To stabilize the potential we also need to add a sextic term, so the full potential is 
\begin{equation}
U=r\rho + u\rho^2 + \lambda |C|^2 + \eta I_5+\kappa\rho^3, \quad u, \lambda, \kappa>0.
\end{equation}
As long as $\kappa, |\eta|$ are sufficiently small, the $\bZ_2$ symmetry is explicitly broken and the degeneracy is lifted.

\bibliography{proliferation.bib}
\end{document}